\newcommand{\tg}{\tau_{\mbox{\tiny G}}}
\newcommand{\Nat}{\mathbb{N}}
\newcommand{\de}{d_{\mathrm{eff}}}
\newcommand{\dif}{\mathrm{d}}
\newcommand{\expo}[1]{\mt{e}^{#1}}
\newcommand{\half}{\frac{1}{2}}
\newcommand{\id}{\mathbbm{1}}                             
\newcommand{\ketbra}[1]{\ket{#1}\!\!\bra{#1}}
\newcommand{\mt}[1]{\mathrm{#1}}
\newcommand{\nf}[2][1]{\nicefrac{#1}{#2}}
\newcommand{\omg}{\omega_{\mbox{\tiny G}}}
\newcommand{\omu}{\omega_{\mbox{\tiny U}}}
\newcommand{\oo}[1]{\mathcal{O}\left({#1} \right)}
\newcommand{\psg}{\Psi_{\mbox{\tiny G}}}
\newcommand{\psu}{\Psi_{\mbox{\tiny U}}}
\newcommand{\rog}{\rho_{\mbox{\tiny G}}}
\newcommand{\rou}{\rho_{\mbox{\tiny U}}}
\newcommand{\tr}[1]{\operatorname{Tr}\left[ {#1} \right]} 
\newtheorem{claim}{Claim}
\crefname{figure}{Fig.}{Figures}
\crefname{appendix}{Appendix}{Appendices}
\crefname{section}{Section}{Sections}
\begin{document}
\title{Rapid spatial equilibration of a particle in a box}

\author{Artur S.L. Malabarba}
\affiliation{H.H. Wills Physics Laboratory, University of Bristol, Tyndall Avenue, Bristol, BS8 1TL, U.K.}

\author{Noah Linden}
\affiliation{School of Mathematics, University of Bristol, University Walk, Bristol BS8 1TW, U.K.}
\author{Anthony J. Short}
\affiliation{H.H. Wills Physics Laboratory, University of Bristol, Tyndall Avenue, Bristol, BS8 1TL, U.K.}

\date{\today}

\begin{abstract}
    We study the equilibration behaviour of a quantum particle in a one-dimensional box, with respect to a coarse grained position measurement (whether it lies in a certain spatial window or not).
    We show that equilibration in this context indeed takes place and does so very rapidly, in a time comparable to the time for the initial wave packet to reach the edges of the box.
    We also show that, for this situation, the equilibration behaviour is relatively insensitive to the precise choice of position measurements or initial condition.
\end{abstract}
\maketitle

Quantum systems are time-reversible and, in their time evolution, have recurrences arbitrarily close to their initial state.
This is in sharp contrast with the irreversibility found in thermodynamic processes and with the concept of evolution towards equilibrium, which is impossible in a periodic system.
Lately however, it has been found that even a closed quantum system can appear to equilibrate~\cite{Lin09, Lin10, Reimann08, Reimann10, Short11, ShortFarrelly11, Masanes11, Hutter13, Malabarba14, Goldstein14, Masanes13, Brandao12, Vinayak12, Cramer12, Kastner13, Goldstein13, TorSan13, GogEis15, Riera12, Meyer98}.

Under the weak assumption of a Hamiltonian with non-degenerate energy gaps, for a system with high-dimension, Linden \emph{et al}~\cite{Lin09, Lin10} proved that the partial state of any small subsystem (no matter how you partition the Hilbert space) stays very close to a static state for the vast majority of its time evolution.

In parallel, Reimann~\cite{Reimann08, Reimann10} showed that, for the same set of Hamiltonians, the expectation values of any `reasonable' quantum observable on a closed system also stays predominantly close to a static value, even though the state itself does not.

Building on this, Short~\cite{Short11} showed that these results apply even if one considers the specific possible outcomes of the measurement, instead of just the expectation value, by describing it as a positive operator-valued measure (POVM).
Thus, the time-evolving state of a closed system is physically nearly indistinguishable from a static state, even under careful experimental scrutiny, for the majority of time.

Since then, more effort has been devoted to studying the dynamics of the equilibration process~\cite{ShortFarrelly11, Masanes11, Hutter13, Malabarba14, Goldstein14, Masanes13, Brandao12, Vinayak12, Cramer12, Kastner13, Goldstein13, TorSan13, GogEis15, Riera12}.
Results by Short and Farrelly~\cite{ShortFarrelly11} place an upper bound on a state's distinguishability which tightens with time.
Masanes \emph{et al}~\cite{Masanes11} derive the equilibration time scale for spin systems averaged over a set of Hamiltonians.
Hutter and Wehner~\cite{Hutter13} find a clear criterion for when systems lose their initial information due to interaction with an environment.
And Malabarba \emph{et al}~\cite{Malabarba14} and Goldstein \emph{et al}~\cite{Goldstein14} show that closed systems equilibrate surprisingly fast for typical measurements (measurements based on random projectors on the Hilbert space).

Here, we aim to gain further physical insight on the precise dynamics involved in equilibration.
Showing how physical time scales---as we observe in thermodynamic systems---emerge from the underlying quantum dynamics has proven a challenge.
In fact, it has been shown that even for general nanoscale systems you can find measurements where this time scale ranges from nanoseconds to the age of the universe~\cite{Malabarba14,Goldstein14}.
Therefore, it is clear that to solve this problem one must consider scenarios where both the Hamiltonian and the measurement are physically meaningful.

For this reason, we consider the equilibration of a simple quantum system composed of a particle in a one-dimensional box, as perceived by coarse grained position measurement (whether it lies in a certain spatial window or not).
Under these conditions, we study the equilibration profile and derive the time scales by solving the time evolution.

A special characteristic of the particle in a box is that the recurrence time (for any initial condition) is at most the ground state period.
This allows us to explore the entire time evolution numerically, which would be unfeasible for systems with very large recurrence times.
These results are then compared to those cited above and it is shown that the equilibration time scale of physical measurements is fast, albeit slower than generic typical measurements.

\section{Definitions}
\label{sec:current-work}
\label{sec:aims}
\label{sec:definitions}

The specific system studied is the quantum particle inside a one-dimensional box of size $L>0$, defined by the free-particle Schrödinger equation
\begin{equation}
    \label{eq:5}
    H = \frac{p^2}{2m},
\end{equation}
where $p$ is the momentum operator, and the boundary conditions restrict the state to the domain $x \in [\frac{-L}{2},\frac{L}{2}]$ in position space,
\begin{equation}
    \label{eq:46}
    \braket{x|\Psi}|_{x=-\frac{L}{2} } = \braket{x|\Psi}|_{x=\frac{L}{2} } = 0.
\end{equation}

We consider here a class of projective measurements $M_w$, with $w \in [0,L]$.
Each measurement ${M_w} = \{A_w, B_w\}$ is a set of two projectors in position space, each corresponding to a different outcome, defined by
\begin{align}
  \label{eq:3-1}
  A_w = \int_{- \frac{w}{2} }^{\frac{w}{2}} \ketbra{x} \dif x, \qquad B_w = \id - A_w.
\end{align}
They represent whether or not the particle is inside a region of width $w$ centered on the origin.

Given a system of dimension $d$, it can appear to equilibrate with respect to a particular measurement because all the information contained in the system's degrees of freedom (up to $d$ complex numbers) is reduced to the probabilities of each measurement outcome (one real number for each possible outcome).
Under this amount of ignorance, even orthogonal states may be indistiguishable and the system may appear to be stationary.
Choosing a two-outcome measurement ensures we are taking full leverage of this and taking an approach that is as simple as possible to understanding equilibration.

Our prime object of study is the distinguishability between two states according to a measurement $M$
\begin{equation}
    \label{eq:7}
    D_M(\rho_1,\rho_2) = \frac{1}{2} \sum_{A \in M} \left| \tr{A(\rho_1 - \rho_2)} \right|,
\end{equation}
which, for $M_w$ reduces to
\begin{equation}
    \label{eq:8}
    D_{M_w}(\rho_1,\rho_2) = \left| \tr{A_w(\rho_1 - \rho_2)} \right|.
\end{equation}
This quantity is significant because it determines the probability of successfully determining whether the system was originally in state $\rho_1$ or $\rho_2$ after performing the measurement $M$~\cite{Short11}.
This probability is $p = \half + \half D_M(\rho_1,\rho_2)$, assuming one has full knowledge of the two states being compared.

The verification that equilibration has happened (at some time $t$) is done by comparing the instantaneous state of a system $\rho(t)$ with its time-average state $\omega = \braket{\rho(t)}_{t\in (0,\infty)}$, also called the equilibrium state, where
\begin{equation}
    \label{eq:42}
    \braket{f(t)}_{t\in (0,\infty)} =
    \lim_{T\rightarrow \infty} \frac{1}{T} \int_{0}^{T} f(t) \dif t.
\end{equation}
Equilibration is achieved once $D_M(\omega,\rho(t))$ becomes small ($\ll 1$) and stays small for the majority of the time evolution.

Another quantity which plays an important role in the equilibration process is the effective dimension~\cite{Malabarba14,Short11,ShortFarrelly11}.
Roughly, it represents the number of energy eigenstates ($\ket{n}$) which the system occupies with significant probability and, for a pure state $\ket{\Psi}$, it can be written as
\begin{equation}
    \label{eq:19}
    \de = \left[ \sum_{n = 1}^{\infty} \left| \braket{\Psi|n} \right|^4 \right]^{-1}.
\end{equation}
Throughout this paper, whenever we mention the high-dimensional limit we are referring to $\de \gg 1$.

Our results, presented in the next \namecref{sec:results}, consist of mathematical statements regarding important characteristics of the equilibration process of the particle in a box system with respect to the simple dichotomic measurement described in \cref{eq:3-1}.
We are effectively asking the question: \emph{``If all I know is which interval the particle is in, does it look like it equilibrates? How fast does that happen?''}.

\section{Results}
\label{sec:results}\label{sec-1-3}\label{sec-1-3-1}\label{sec:gauss-distr}

First, we consider the initial condition with an approximately Gaussian distribution,
\begin{align}
  \label{eq:2}
  \braket{x|\psg(0)} &= N \left( \expo{-\left( \frac{x}{2\sigma} \right)^2} -
                       \expo{-\left( \frac{L}{4\sigma} \right)^2 } \right) \notag\\
                     &\approx \left( \frac{1}{2 \pi\sigma^2} \right)^{\frac{1}{4} }
                       \expo{-\left( \frac{x}{2\sigma} \right)^2 }\\
  \rog &= \ketbra{\psg},\notag
\end{align}
where the arbitrary constant $\sigma$ defines the initial width of the wave packet in position space, $N$ is the normalization constant, and the approximate relation is due to neglecting the second exponential on the first line.
This is a smooth function concentrated mostly inside the support of a single outcome of $M$ ($A_w$), as long as $\sigma < w$.
For this system, the eigenstates, the energy levels, and the energy amplitudes are
\begin{align}
  \label{eq:20}
  \braket{x|n} &= \sqrt{\frac{2}{L}} \sin \left( \! n \pi \! \left( \frac{x}{L} + \half \right) \!\!\right),
                 \quad x \in \left[ \frac{-L}{2},\frac{L}{2} \right]\\
  E_n &= \frac{\hbar^2 n^2 \pi^2}{2 m L^2} \\
  \braket{\psg|n} &\approx \left( {\frac{2\pi \sigma^2}{L^2} } \right)^{\frac{1}{4} }
                    \expo{-\left( \frac{n \pi\sigma}{L} \right)^2}
                    2 \sin{\left( \frac{n\pi}{2} \right) }.
\end{align}

Before describing the distinguishability we state our first result about this system.
The effective dimension, defined in \cref{eq:19}, can be written as
\begin{align}
  \label{eq:1}
  \de^{-1}
  &\approx \sum_{n = 1}^{\infty} 2 \left( \frac{4 \sqrt{\pi}\sigma}{L} \right)^2
    \expo{-\left( \frac{2n \pi \sigma}{L} \right)^2} \sin^4{\left( \frac{n\pi}{2} \right) } \notag\\
  &= \left( \frac{4\sqrt{\pi}\sigma}{L} \right)^2
    \sum_{n=-\infty}^{\infty} \expo{-\left( \frac{2n \pi \sigma}{L} \right)^2} \sin^4{\left( \frac{n\pi}{2} \right) } \notag\\
  & = \left( \frac{4\sqrt{\pi}\sigma}{L} \right)^2
    \sum_{j=-\infty}^{\infty} \expo{-\left( \frac{4 \pi \sigma}{L} \right)^2\left(j + \nf{2} \right)^2} \notag\\
  & = \left( \frac{4\sqrt{\pi}\sigma}{L} \right)^2 \left( \int_{-\infty}^{\infty} \!\expo{-\left( \frac{4 \pi \sigma}{L} \right)^2\left(j + \nf{2} \right)^2} \dif j + \oo{\frac{\sigma}{L}} \right) \notag\\
  & = \left( \frac{4\sqrt{\pi}\sigma}{L} \right) \left( 1 + \oo{\frac{\sigma^2}{L^2} } \right)
\end{align}
Where in the third line we have used the fact that only the odd-$n$ terms contribute to the sum.
Furthermore, since the error involved in the approximation of \cref{eq:2} is exponentially small in $L/\sigma$, for $\sigma/L$ small enough we may incorporate this into $\oo{\frac{\sigma^2}{L^2}}$, which leads to
\begin{equation}
    \label{eq:13}
    \de = \frac{L}{4\sqrt{\pi}\sigma} + \oo{\frac{\sigma}{L} }.
\end{equation}
This means that $\de \gg 1$ corresponds to $\sigma \ll L$.

\subsection{The Distinguishability}
\label{sec:distance}

The equilibrium state~\cite{ShortFarrelly11,Short11,Lin09} is calculated as the average over all time of the initial condition.
This is equal to the initial state decohered in the energy basis, which, for $\rho(0) = \rog$ reduces to
\begin{align}
  \label{eq:10}
  \omg \approx \sqrt{\frac{\pi}{2} }{\frac{8\sigma}{L} }\sum_{n = 1}^{\infty}
  \ketbra{n} \expo{-2{\left( \frac{n \pi \sigma}{L} \right)}^2}
  \sin^2{\left( \frac{n\pi}{2}\right) },
\end{align}
where, again, the error is exponentially small in $\frac{L}{\sigma}$, as per \cref{eq:2}.
Naively, one might expect this to have a uniform distribution in position space, but \cref{fig:4} shows that there remains a narrow `spike' at the origin even in the high dimensional limit.
This arises because $\omg$ only contains energy levels of odd $n$, all of which have maximal probability at the origin.
\begin{figure}
    \includegraphics[width=\linewidth]{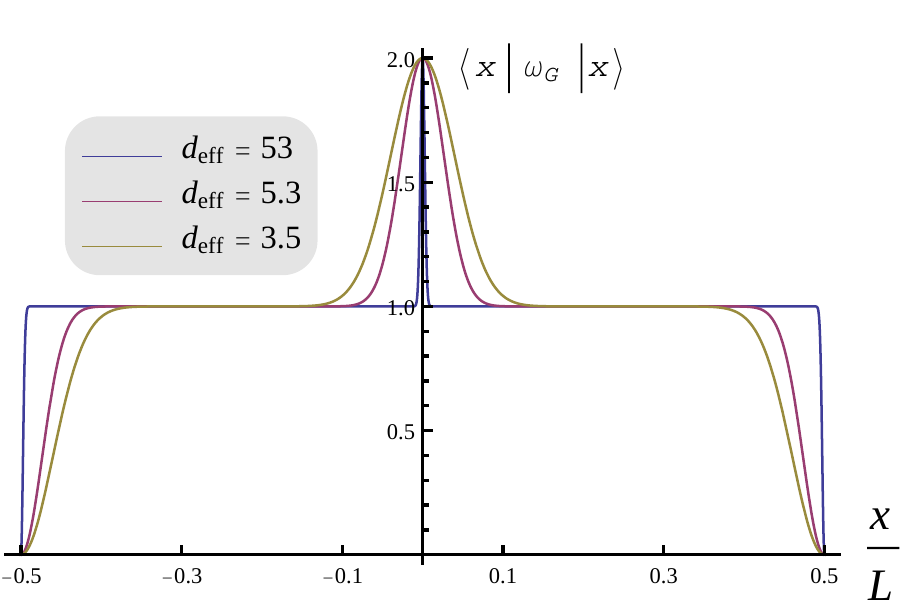}
    \caption{Probability distribution of the equilibrium state, $\braket{x|\omg|x}$ from \cref{eq:10}, of the Gaussian initial condition \cref{eq:2} for several values of $\de$.}
    \label{fig:4}
\end{figure}

Finally, armed with a time-evolving state $\rog(t)$ and its time-averaged state $\omg$ we can investigate the equilibration process.
\begin{figure}
    \centering
    \includegraphics[width=\linewidth]{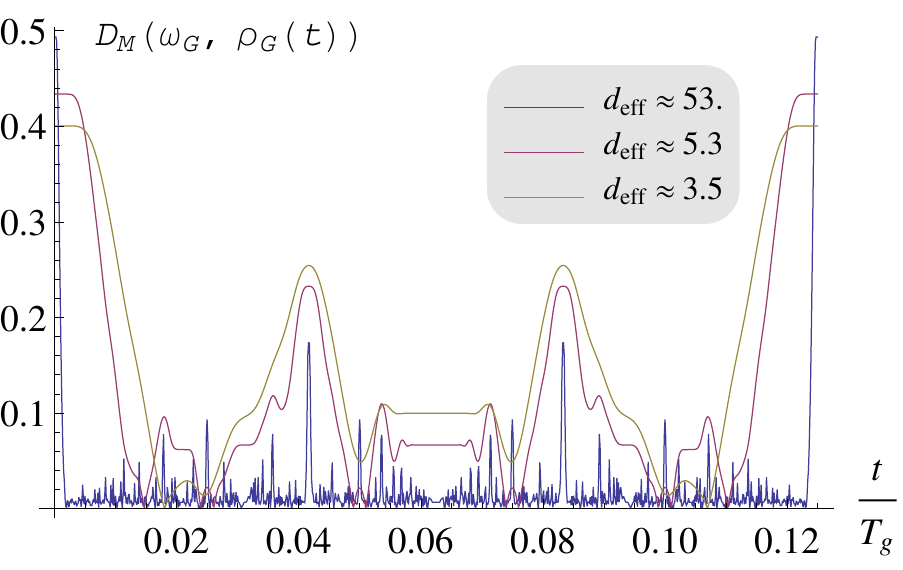}
    \caption{Distinguishability, as defined in \cref{eq:8}, between the time evolution of $\rog$ (\cref{eq:2}) and its equilibrium state $\omg$ as a function of time, for three different $\de$ and $w = \frac{L}{2}$.
    $T_g$ is the ground state period, see \cref{eq:4}.}
    \label{fig:3}
\end{figure}
For the system and measurement in question, the time evolution of $D_{M_w}(\omega,\rho)$, as defined in \cref{eq:8}, is displayed in \cref{fig:3} for $w = L/2$.
The effect of equilibration can be seen clearly for high dimension, as the system (i) starts sharply out of equilibrium, (ii) quickly drops to a small fraction of the initial distance, and (iii) oscillates within that distance for the predominant majority of the evolution.
It is also remarkable that equilibration happens with such ease, where only extremely simple systems ($\de < 50$) are well distinguishable from their equilibrium states.

One common pattern in the literature~\cite{ShortFarrelly11, Malabarba14, Lin09, Short11} is to analyse how the infinite-time average of the distinguishability scales with the effective dimension of the initial state.
\begin{figure}
    \includegraphics[width=\linewidth]{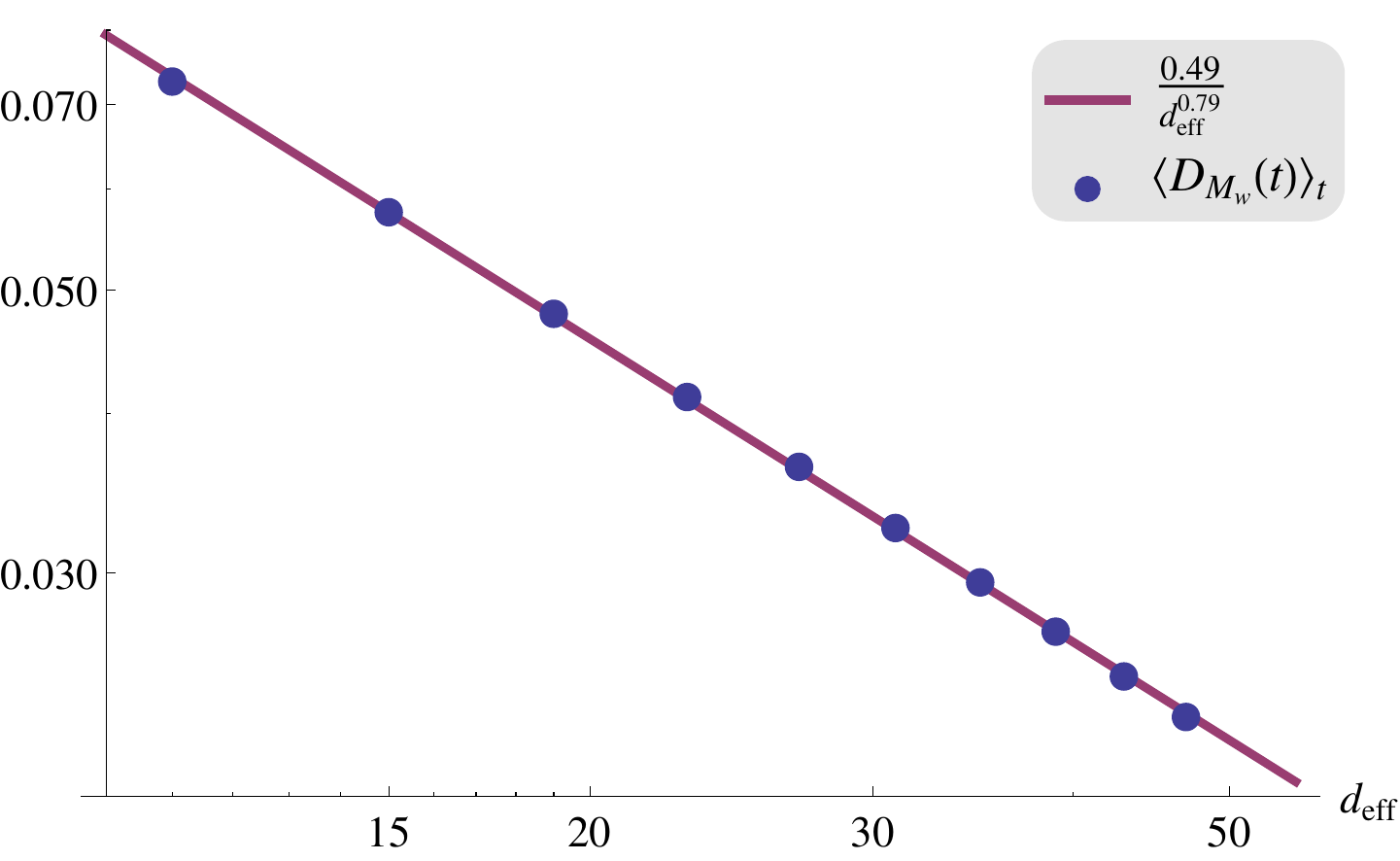}
    \caption{Log-scale plot of the time average of the distinguishability for the Gaussian initial condition, $D_{M_\frac{L}{2}}(\rog(t),\omg)$, as per \cref{eq:8,eq:10,eq:2}, for several values of $\de$.
    The line shows the best linear fit over the logarithmic data.}
    \label{fig:6}
\end{figure}
For our conditions, \cref{fig:6} illustrates that it scales inverse-polynomially, at roughly $1/2\de^{0.8}$, slightly faster than the known upper bounds for general systems~\cite{ShortFarrelly11}, which scale with $\de^{-1/2}$.
This is likely because these bounds always involve a step employing Jensen's inequality, $\braket{\sqrt{D^2}}<\sqrt{\braket{D^2}}$, which probably underestimates the scaling.

\subsection{The Time Scale}
\label{sec:time-scale}

From the evolution of $D_{M_w}(\rog(t),\omg)$ it is easy to see exactly when the high-dimensional system equilibrates, given that after the initial decay it displays only a few short-lived fluctuations away from equilibrium.
The red dots of \cref{fig:2-1} (the same data as the blue line of \cref{fig:3} for a shorter time interval) display the equilibration profile quite clearly.
Thus, for this system, the equilibration time scale can be defined as the characteristic width of the first peak.
\begin{figure}
    \includegraphics[width=\linewidth]{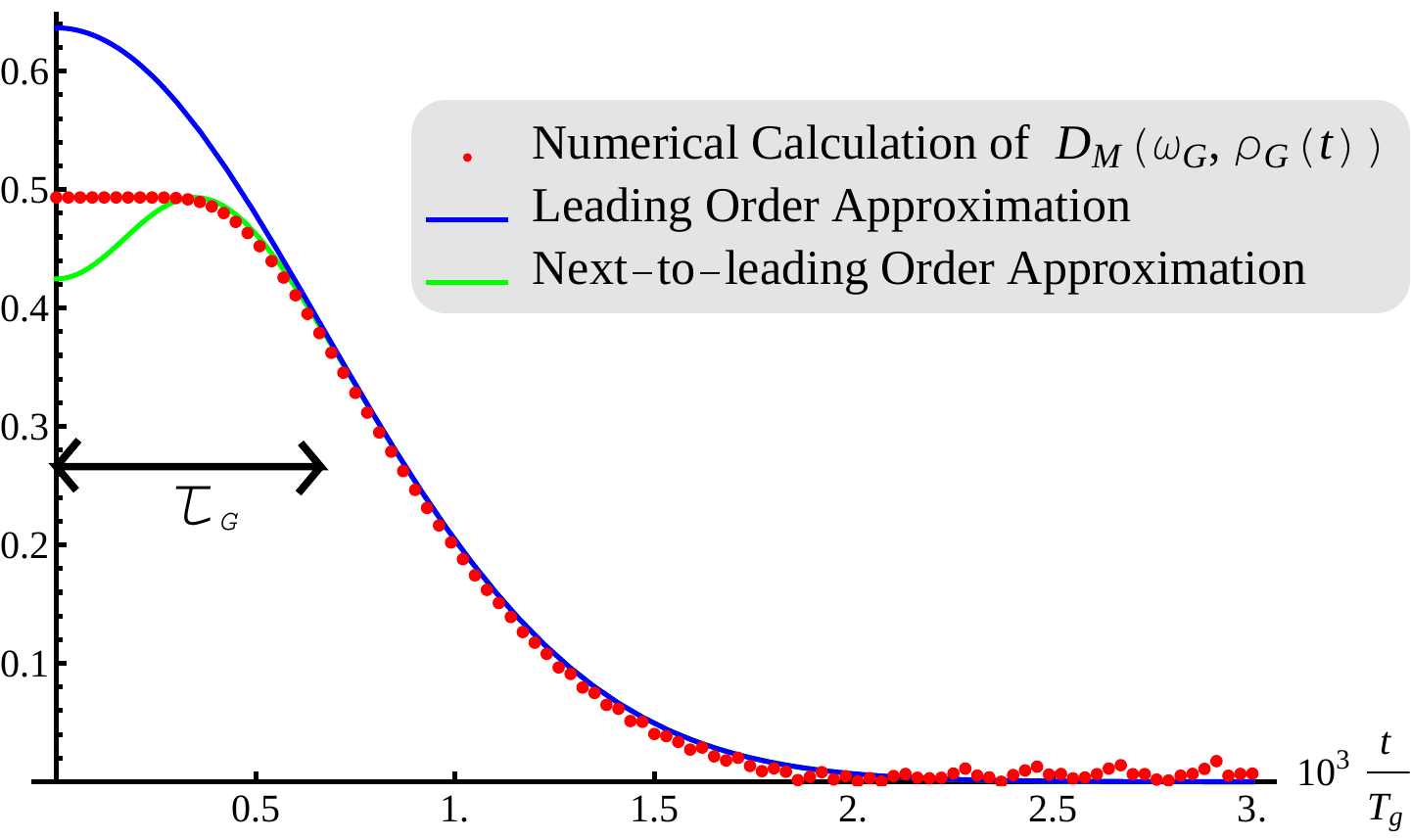}
    \caption{Numerical computation $D_{M_\frac{L}{2}}(\rho(t),\omega)$ compared to the leading order (\cref{eq:12}) and next-to-leading order approximations, for $\de \approx 53$.
      The expression for the next-to-leading order, as well as the full expansion, is found in \cref{eq:42-app} of \cref{sec:equil-time-scale}.
      The red dotted line is a zoom of the plot in \cref{fig:3}.
      The constant region up to $t/T_g \approx 3\times 10^{-4}$ happens because the wave function has not yet spread beyond the edges of the measurement window.}
    \label{fig:2-1}
\end{figure}

This leads us to our most important result of this section.
As detailed in \cref{sec:equil-time-scale}, in the limit $\de \gg 1$ and $t \ll \tg \de$, \cref{eq:8} reduces to
\begin{align}
  \label{eq:12}
  \!\!\!D_{M_{\frac{L}{2} }}(\omg,\rog(t)) = \frac{2}{\pi} \left| \expo{-\frac{t^2}{2\tg^2} }
  + \oo{\!\!\!{\left( \expo{-\frac{t^2}{2\tg^2} } \right)}^9,\frac{\sigma}{L} } \right|,
\end{align}
where
\begin{equation}
    \label{eq:3}
    \tg = \frac{mL \sigma}{\hbar \pi},
\end{equation}
is the characteristic time scale for equilibration in this system.
The significance of this expression is clear from \cref{fig:2-1}, which compares the same numerical results presented in \cref{fig:3} (for the high dimension case) with the expression above, in the short time limit where equilibration occurs.
For $t\approx 0$ you need higher order terms to get a precise estimate.
However, for $t\approx \tg$, the lowest order approximation becomes the dominant contribution, and it is all one needs to describe the distinguishability in the vicinity of the equilibration time scale.

One way of writing \cref{eq:3} adimensionally is to compare it to the period of the groundstate $T_g$
\begin{equation}
    \label{eq:4}
    \frac{\tg}{T_g} = \frac{1}{16 \sqrt{\pi}\de},
\end{equation}
where $T_g = \frac{2\pi\hbar}{E_1} = \frac{4mL^2}{\hbar \pi}$.

It is also interesting to compare this with recent results~\cite{Malabarba14}, which show that the equilibration time scales for typical measurements (random pojectors on the Hilbert space) scale with $\hbar/SD_E$, where $SD_E$ is the standard deviation in energy of the state.
In \cref{sec:energy-variance}, we show that this corresponds to
\begin{align}
  \label{eq:41}
  \frac{\tau_{\mathrm{typical}}}{T_g} &= \frac{1}{16 \de^2}
\end{align}
for our system.
This proves that, while our system presents fast equilibration for projective position measurements, it is not nearly as fast as the time scales predicted for random projective measurements.
This corroborates the notion that measurements of physical significance (like the ones studied here) are much slower than the average.

\section{Generality}
\label{sec:discussion}

\subsection{On the Measurements}
\label{sec:measurements}

\begin{figure}
    \includegraphics[width=\linewidth]{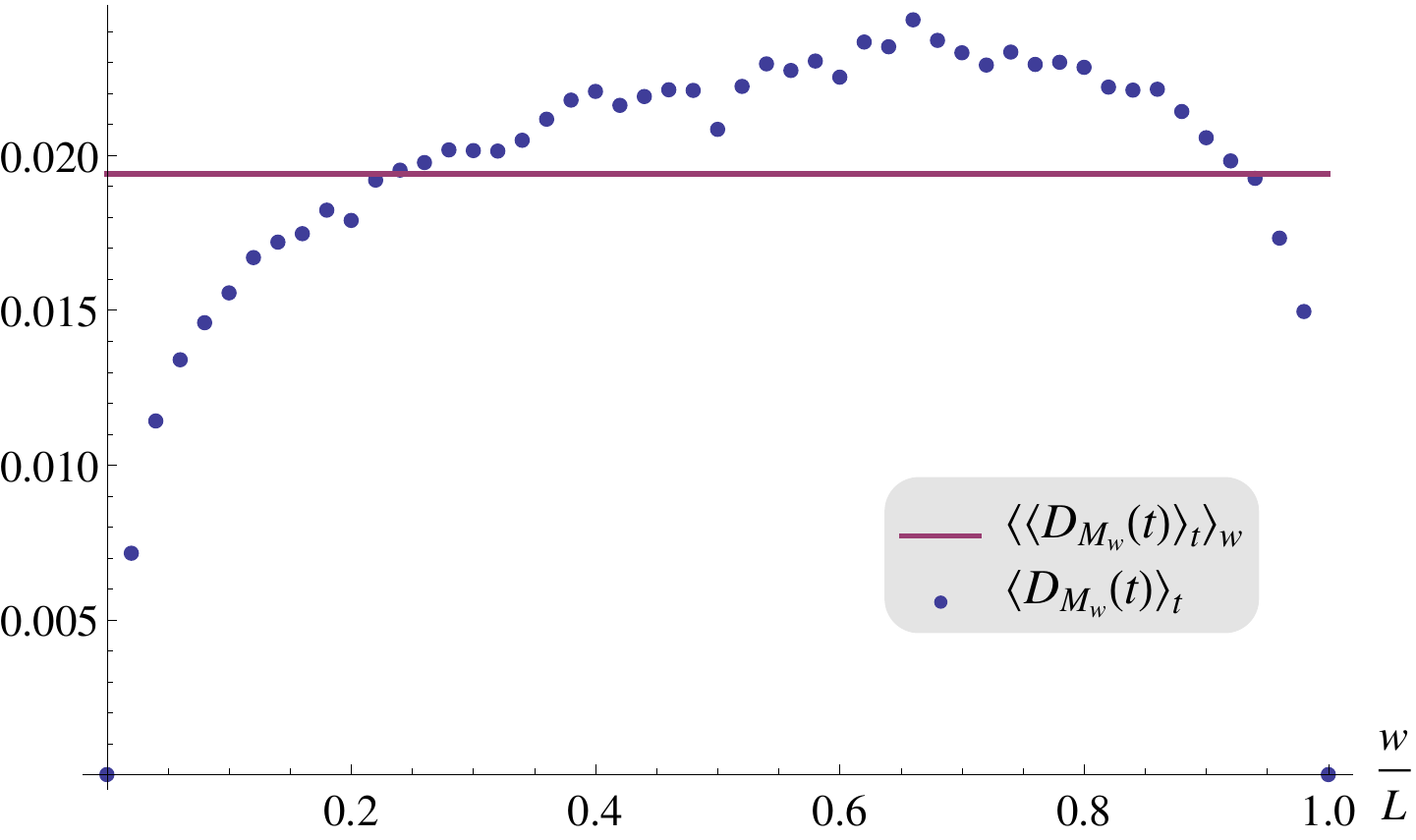}
    \caption{Numerical computation of the distinguishability, $D_M(\omg,\rog(t))$, averaged over time, for a range of different measurement widths.
      The line is the average value of this distribution.
      Initial condition is a Gaussian wave-packet \cref{eq:2} with $\de \approx 53$.}
    \label{fig:5}
\end{figure}
All the plots in the previous Section take $w$ to be $L/2$.
However, numerical tests were performed with various different widths and neither the overall behaviour of $D_{M_w}(\omega,\rho)$ nor the equilibration time scale varied significantly.
As seen in \cref{fig:5}, the amplitude of $D_{M_w}$ only shows significant change when $w$ is very close to $0$ or $L$---in which case it falls to $0$ as the measurement conveys no information.
This is evidence that these results are not a consequence of some particular symmetry emerging from the chosen width, but are the behaviour one can expect from projective measurements in position space.
In fact, most measurement of this class are slightly better than the $w=L/2$ measurement, but this choice of $w$ was helpful for our analytical results.

\subsection{On The Initial Condition}
\label{sec:other-conditions}

In order to verify that our results are not a consequence of some special property of the initial condition, the same Hamiltonian and measurement were also investigated under the initial condition of a uniform distribution in energy space
\begin{equation}
    \label{eq:11}
    \ket{\psu} = \frac{1}{\sqrt{\de}}\sum_{i = 1}^{\de} \ket{i}.
\end{equation}
Satisfactorily, this system (described in more detail in \cref{sec-1-1}) presented a very similar equilibration profile.
Its time scale was analytically estimated to be similar to the Gaussian scenario in its proportionality to $\de^{-1}$,
\begin{equation}
    \label{eq:21}
    \frac{\tau_{\mbox{\tiny U}}}{T_g} \approx \frac{1}{4 \de},
\end{equation}
and the infinite-time average of its distinguishability scales with $0.77\de^{-0.77}$, which is also very similar to the Gaussian's scaling although the multiplicative constant is larger.

\section{Discussion}
\label{sec:discussion-1}

Recent developments in the area have shown that generic time scale bounds are too weak to be physically meaningful, while typical (but non-physical) measurements or typical Hamiltonians equilibrate extremely rapidly.
As such, in the search for physical equilibration time scales, we must investigate the specifics of particular systems and measurements.
This work takes as step in that direction, thoroughly analysing a physical system and showing that fast time scales emerge, albeit much slower than those predicted for a typical observable.

Here, we have exposed the equilibration dynamics of a wave-packet constrained inside a box when inspected by coarse-grained position measurements (onto a spatial window).
We have derived an equation for the distinguishability in the vicinity of the equilibration time scale, which also results in an equation for the time scale itself.
This, in turn, is used to show that it scales with $1/\de$, in constrast with the much faster time of $1/\de^2$ which is known for typical measurements in this system.

Note that an estimate of the average time for the particle to reach the edges of the box is given by
\begin{equation}
    \tau_{box} = \frac{(L/2)} {(\Delta p/m)}
    = \frac{m L \sigma} {\hbar} = \pi \tg.
\end{equation}
Hence, perhaps surprisingly, the time scale for equilibration is comparable to the time scale to reach the walls for the first time.
In particular, this means that equilibration is fast in a natural sense, and the particle does not need to `bounce back and forth' many times in the box in order to equilibrate well.

It is remarkable that, despite being a simple and well-studied textbook system, there are still questions to be asked about the particle in a box and still lessons to be learned.
In particular, our results show that this simple physical system can display equilibration behaviour over a physically realistic time scale, which reinforces recent developments concerning equilibration of abstract systems.

\bibliography{../../../Org/papers.org}
\appendix

\section{Equilibration Time Scale}
\label{sec:equil-time-scale}
In this \namecref{sec:equil-time-scale} we use the following abbreviations, $\phi = 2\pi \sigma/L$, $\nu_{nj} = (E_n - E_j)/\hbar$, $\nu = E_1/\hbar$, and $\rho_{n,j} = \braket{n|\rog|j}$.
\begin{claim}\label{cl:3}
    According to the set of outcomes $M = \{A_w,B_w\}$, with $w = \frac{L}{2}$, the time-dependent distance between the state in \cref{eq:2} and its equilibrium state (\cref{eq:10}) is
    \begin{equation}
        \label{eq:41-app}
        D_M(\omg,\rog(t)) \approx \frac{2}{\pi}\frac{\phi}{\sqrt{2\pi}}
        \left| \sum_{k,j=1}^{\infty}
          \cos{( 4kl\nu t )} \expo{-(k^2+l^2)\frac{\phi^2}{2}} B_{kl}
        \right|,
    \end{equation}
    where $B_{kl}= \left[ \cos{l \pi } - \cos{k \pi } \right] \left[\frac{\sin(k\pi/2)}{k} - \frac{\sin(l\pi/2)}{l}\right]$.
    Where the approximate relation is exponentially precise in $1/\phi$, as per \cref{eq:20}.
\end{claim}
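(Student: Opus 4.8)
The plan is to expand $D_M(\omg,\rog(t)) = |\tr{A_{L/2}(\omg-\rog(t))}|$ in the energy eigenbasis, evaluate in closed form the two elementary objects that appear — the overlaps $\braket{n|\psg}$ and the position-space matrix elements $\braket{j|A_{L/2}|n}$ — and then change summation variables so that the surviving trigonometric factors merge into $B_{kl}$. For the first step, inserting resolutions of the identity and using that $\braket{n|\omg|j} = \delta_{nj}\rho_{n,n}$ (the energies $E_n\propto n^2$ being non-degenerate) and $\braket{n|\rog(t)|j} = \rho_{n,j}\,\expo{-\mathrm{i}\nu_{nj}t}$, the diagonal cancels in $\omg-\rog(t)$ and
\begin{equation}
  \tr{A_{L/2}(\omg-\rog(t))} = -\sum_{n\neq j}\rho_{n,j}\,\braket{j|A_{L/2}|n}\,\expo{-\mathrm{i}\nu_{nj}t}.
\end{equation}
Because $\omg-\rog(t)$ and $A_{L/2}$ are Hermitian and the $\psg$-amplitudes are real, the summand is real and symmetric under $n\leftrightarrow j$, so $\expo{-\mathrm{i}\nu_{nj}t}$ can be replaced by $\cos(\nu_{nj}t)$; the Gaussian decay of $\rho_{n,j}$ established below makes the double sum absolutely convergent, so the manipulations that follow are legitimate.

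Next I would compute the ingredients. From \cref{eq:20}, $\braket{n|\psg}$ is proportional — with error exponentially small in $1/\phi$ — to $\expo{-(n\pi\sigma/L)^2}\sin(n\pi/2)$, so $\rho_{n,j}=\braket{n|\psg}\braket{\psg|j}$ vanishes unless $n,j$ are both odd and equals $\tfrac{4\phi}{\sqrt{2\pi}}\expo{-(n^2+j^2)\phi^2/4}\sin(n\pi/2)\sin(j\pi/2)$. For the matrix element, inserting the eigenfunctions of \cref{eq:20} and substituting $u = x/L+\half$ gives
\begin{equation}
  \braket{j|A_{L/2}|n} = \int_{1/4}^{3/4}\big[\cos((j-n)\pi u)-\cos((j+n)\pi u)\big]\,\dif u,
\end{equation}
which is elementary; evaluated at the quarter points it collapses, using the identity $((-1)^m-1)\sin(m\pi/2)=-2\sin(m\pi/2)$, to a difference of $\sin(n\pi/2)/n$ and $\sin(j\pi/2)/j$ terms.

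The final step is the reindexing. For odd $n\neq j$ put $k=\half(n+j)$ and $l=\half|n-j|$, two positive integers of opposite parity with $k\neq l$, and conversely $n=k+l$, $j=|k-l|$. Then $\nu_{nj}=\nu(n^2-j^2)=4kl\,\nu$ and $n^2+j^2=2(k^2+l^2)$, so the Gaussian becomes $\expo{-(k^2+l^2)\phi^2/2}$; the product-to-sum formula turns $\sin(n\pi/2)\sin(j\pi/2)$ into $\half(\cos l\pi-\cos k\pi)$, and the integral above turns into $\tfrac1\pi\big(\tfrac{\sin(k\pi/2)}{k}-\tfrac{\sin(l\pi/2)}{l}\big)$ — exactly the two brackets of $B_{kl}$. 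Each unordered pair $\{n,j\}$ maps to a unique $(k,l)$ with $k>l$; since the resulting summand is symmetric in $k\leftrightarrow l$ and vanishes whenever $k=l$ or $k,l$ have equal parity (there $B_{kl}=0$), the sum over $n\neq j$ becomes the unrestricted $\sum_{k,l\ge 1}$ of \cref{eq:41-app} with the overall constant $\tfrac2\pi\tfrac{\phi}{\sqrt{2\pi}}$ pulled out; taking the modulus gives the claim.

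The only real obstacle is the bookkeeping of this last step: one must check that the parity structure coming from $\sin(n\pi/2)\sin(j\pi/2)$, the identity $n^2-j^2=4kl$, and the boundary evaluation of the $A_{L/2}$-integral all conspire so that the trigonometric factors are precisely the two brackets of $B_{kl}$ with the correct powers $1/k$ and $1/l$, and that the change of index is a genuine bijection onto opposite-parity pairs so that nothing is double counted. The sole approximation used is the Gaussian form of $\braket{n|\psg}$ from \cref{eq:20}, so the error in \cref{eq:41-app} is exponentially small in $1/\phi$, as stated.
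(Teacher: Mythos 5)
Your proposal is correct and follows essentially the same route as the paper's proof: expand $\tr{A_{L/2}(\omg-\rog(t))}$ in the energy basis, evaluate $\rho_{n,j}$ and $\braket{n|A_{L/2}|j}$ explicitly, and reindex via $k=\tfrac{1}{2}(n+j)$, $l=\tfrac{1}{2}|n-j|$ (the paper packages this as the splitting identity of \cref{eq:38-app} plus the vanishing of the odd-difference family, but it is the same substitution), with the symmetry of the summand and $B_{kk}=0$ converting the ordered sum into the unrestricted one. The only blemish is a notational slip where you describe the matrix element as a difference of $\sin(n\pi/2)/n$ and $\sin(j\pi/2)/j$ terms (and quote a garbled boundary identity involving $\sin(m\pi/2)$ rather than $(-1)^m+1$ times $\sin(m\pi/4)$); the correct expression, which you do state in the following paragraph, is $\tfrac{1}{\pi}\bigl[\tfrac{\sin(k\pi/2)}{k}-\tfrac{\sin(l\pi/2)}{l}\bigr]$ in the new indices, matching \cref{eq:43}.
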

\begin{proof}\label{pr:1}
    First, as defined in \cref{eq:8}
    \begin{align}
      \label{eq:34-app}
      D_M(\omg,\rog(t))
      &= \left| \tr{A_\frac{L}{2}(\rog(t)-\omg)} \right| \\
      &= \left| \sum_{n\neq j = 1}^{\infty} \rho_{n,j} \expo{i \nu_{nj} t} \braket{n|A_\frac{L}{2}|j}\right|\notag\\
      &= 2 \left|\sum_{n = 2}^{\infty} \sum_{j=1}^{n-1} \rho_{n,j} \cos( \nu_{nj} t) \braket{n|A_\frac{L}{2}|j} \right|, \notag
    \end{align}
    where
    \begin{align}
      \label{eq:25}
      \rho_{n,j} &\approx \frac{4 \phi}{\sqrt{2\pi}}
                   \expo{-\phi^2 \frac{j^2+n^2}{4} }
                   \sin \frac{n\pi}{2}\sin \frac{j\pi}{2} \\
                 &= \frac{2 \phi}{\sqrt{2\pi}}
                   \expo{-\phi^2 \frac{j^2+n^2}{4} }
                   \left[ \cos \frac{\pi(n-j)}{2} - \cos \frac{\pi(n+j)}{2} \right].\notag
    \end{align}
    Furthermore, it is straightforward to see that
    \begin{align}\label{eq:35}
      \braket{n|A_\frac{L}{2}|j}
      &= \frac{2}{L} \int_{-\frac{L}{4}}^{\frac{L}{4}}
        \sin \left( \! n \pi \! \left( \frac{x}{L} + \half \right) \!\!\right)
        \sin \left( \! j \pi \! \left( \frac{x}{L} + \half \right) \!\!\right) \dif x \notag\\
      &= \frac{2}{L} \int_{\frac{L}{4}}^{\frac{3L}{4}}
        \sin \left( n \pi x/L \right)
        \sin \left( j \pi x/L \right) \dif x \notag\\
      &= \frac{2}{\pi} \int_{\frac{\pi}{4}}^{\frac{3\pi}{4}}
        \sin \left( n \alpha \right)
        \sin \left( j \alpha \right) \dif \alpha \notag\\
      &= \frac{1}{\pi} \int_{\frac{\pi}{4}}^{\frac{3\pi}{4}}
        \cos \left( (n-j) \alpha \right)
        - \cos \left( (n+j) \alpha \right) \dif \alpha \notag\\
      &= \frac{1}{\pi} \left[
        \frac{\sin \left( n-j \right)}{n-j} -
        \frac{\sin \left( n+j \right)}{n+j}
        \right]_{\frac{\pi}{4} }^{\frac{3\pi}{4} } \notag\\
      &= -\frac{(-1)^{n-j} + 1}{\pi ({n-j})} {\sin \left( \pi \frac{n-j}{4} \right)} \notag\\
      &\quad\quad +\frac{(-1)^{n+j} + 1}{\pi ({n+j})} {\sin \left( \pi \frac{n+j}{4} \right)}.
    \end{align}
    Finally, we state the identity
    \begin{align}
      \label{eq:38-app}
      \sum_{n = 2}^{\infty} \sum_{j=1}^{n-1} f(n,j) = \sum_{k=2}^{\infty}& \sum_{l=1}^{k-1}f(k+l,k-l) + \notag\\
                                                                         &\sum_{k=1}^{\infty} \sum_{l=0}^{k-1}f(k+l+1,k-l),
    \end{align}
    and note that
    \begin{equation}
        \label{eq:27}
        \sin \frac{(k+l+1)\pi}{2}\sin \frac{(k-l)\pi}{2} = 0, \qquad \forall k,l \in \mathbb{Z},
    \end{equation}
    and
    \begin{align}
      \label{eq:43}
      \braket{k+l|A_\frac{L}{2}|k-l} &= \frac{1}{\pi} \left[ \frac{\sin(\pi k/2)}{k} - \frac{\sin(\pi l/2)}{l} \right].
    \end{align}
    Combining all of these identities with \cref{eq:34-app}, yields
    \begin{equation}
        \label{eq:22}
        D_M(\omg,\rog(t)) \approx \frac{4}{\pi}\frac{\phi}{\sqrt{2\pi}}
        \left| \sum_{k=2}^{\infty}
          \sum_{l=1}^{k-1} \cos{( 4kl\nu t )}
          \expo{-(k^2+l^2)\frac{\phi^2}{2}} B_{kl}
        \right|,
    \end{equation}
    which can be easily reduced to \cref{eq:41-app} by noting that the summand is symmetric with respect to $k$ and $l$, and that $B_{kk}=0$.
\end{proof}

Next, note that the double sum in \cref{eq:41-app} can be divided into two expressions, respectively proportional to $\sin(l\pi/2)$ and to $\sin(k\pi/2)$.
Furthermore, $\sin(l\pi/2)$ is non-zero only when $l$ is odd, which means $(\cos{l \pi } - \cos{k \pi }) = -(1 + \cos{k \pi })$ which, in turn, means that the only non-zero terms of this expression those with odd $l$ and even $k$.
Then, the part of the double sum in \cref{eq:22} which is proportional to $\sin(l\pi/2)$ looks like
\begin{equation}
    \label{eq:48}
    - 2\sum_{\begin{subarray}{c}k=2\\\mathrm{even}\end{subarray}}^{\infty} \sum_{\begin{subarray}{c}l=1\\\mathrm{odd}\end{subarray}}^{\infty} 
    \cos{( 4kl\nu t )} \expo{-(k^2+l^2)\frac{\phi^2}{2}} \left[ \frac{\sin(l\pi/2)}{l} \right].
\end{equation}
Applying the same logic to the expression proportional to $\sin(k\pi/2)$ yields
\begin{equation}
    \label{eq:48-1}
    - 2\sum_{\begin{subarray}{c}k=1\\\mathrm{odd}\end{subarray}}^{\infty}\sum_{\begin{subarray}{c}l=2\\\mathrm{even}\end{subarray}}^{\infty} 
    \cos{( 4kl\nu t )} \expo{-(k^2+l^2)\frac{\phi^2}{2}} \left[ \frac{\sin(k\pi/2)}{k} \right],
\end{equation}
which is identical to \cref{eq:48} up to a swap of dummy indices.
Thus, by writing the odd index in each expression as $2p+1$ and the even index as $2o$, we have
\begin{align}
  \label{eq:41-app-2}
  &D_M(\omg,\rog)\\
  &\quad\approx \frac{8}{\pi} \frac{\phi}{\sqrt{2\pi}}
    \Bigg| \sum_{{\begin{subarray}{c}o=1\\p=0\end{subarray}}}^{\infty} \cos{(8o(2p+1)\nu t )} \frac{(-1)^p}{2p+1}\notag\\[-0.2cm]
  &\qquad\qquad\qquad \qquad
    \expo{-(4o^2+(2p+1)^2)\frac{\phi^2}{2}} \Bigg|.\notag
\end{align}
We can further abbreviate this by defining $R_p = \frac{{(-1)}^p}{2p+1} \expo{-{(2p+1)}^2 \frac{\phi^2}{2} }$,
\begin{align*}
  D_M(\omg,\rog) &\approx \frac{8\phi}{\pi\sqrt{2\pi}} \left| \sum_{{\begin{subarray}{c}o=1\\p=0\end{subarray}}}^{\infty}
  \cos{\left(8(2p+1)o\nu t\right)}R_p\expo{-2o^2\phi^2} \right|\\
                 &= \frac{2}{\pi} \frac{4 \phi}{\sqrt{2\pi}}\! 
                   \left| \sum_{p=0}^{\infty} \! R_p\!\!
                   \sum_{o=1}^{\infty} \!\! \expo{-2o^2\phi^2} \!\! \cos{\left(8(2p+1)o\nu t\right)}
                   \right|.
\end{align*}
This, combined with the small time assumption, allows us to approximate the sum over $o$ by an integral
\begin{align*}
  \frac{4 \phi}{\sqrt{2\pi}} \sum_{o=1}^{\infty} \cos{\left(8(2p+1)o\nu t\right)}\expo{-2o^2\phi^2}
  &= \expo{-\frac{t^2{(2p+1)}^2}{2\tg^2} } + \oo{\phi},
\end{align*}
where $\tg$ is defined in \cref{eq:3}.
This directly leads to
\begin{align}
  \label{eq:42-app}
  D_M(\omg,\rog(t)) &= \frac{2}{\pi} \left| \sum_{p=0}^{\infty} R_p \expo{-\frac{t^2{(2p+1)}^2}{2\tg^2} } + \oo{\frac{\sigma}{L}} \right|,
\end{align}
which then produces \cref{eq:12}.
In particular, note that $|R_1| = |R_0|^9/3$, which is why the $p=0$ term is sufficient for a very good estimate of the equilibration time.
On the other hand, at $t=0$ this is very close to an alternating harmonic series, which is why it converges slowly and needs many terms to describe the constant region from \cref{fig:2-1}.

\section{Energy Standard Deviation}
\label{sec:energy-variance}

In this Appendix, we calculate the energy standard deviation (named here $v_E$) for the Gaussian initial state.
Taking $\phi = 2\frac{\pi\sigma}{L} $, we have, for any $p \in \Nat$,
\begin{align}
  \label{eq:30}
  \tr{H^p\rog} &= \sum_{n = 1}^{\infty} \braket{n|\rog|n} E_n^p \notag\\
               &\approx E_1^p \frac{4\sigma \sqrt{2\pi}}{L} \sum_{n = 1}^{\infty} n^{2p} \expo{-n^2 \frac{\phi^2}{2} } \sin^2 \left( \frac{n \pi}{2} \right) \notag\\
               &= E_1^p \frac{2 \phi}{\sqrt{2\pi}} \sum_{n = -\infty}^{\infty} n^{2p} \expo{-n^2 \frac{\phi^2}{2} } \sin^2 \left( \frac{n \pi}{2} \right) \notag\\
               &= \left( \frac{2 E_1}{\phi^{2}} \right)^p S_p(\sqrt{2}\phi),
\end{align}
where
\begin{equation}
    \label{eq:34}
    S_p(\gamma) = \frac{1}{\sqrt{\pi}} \sum_{j = -\infty}^{\infty} \gamma^{2p+1}{(j+\nf{2})}^{2p} \expo{-{(j+\nf{2})}^2 \gamma^2 }.
\end{equation}

Given that
\begin{equation}
    \label{eq:31}
    S_p(\gamma) = \frac{1}{\sqrt{\pi}}\int_{- \infty}^{\infty} \!\!\dif x\,\, x^{2p} \expo{-x^2} + \oo{\gamma^2},
\end{equation}
one has that, for small $\gamma$,
\begin{align}
  \label{eq:32}
  S_1(\gamma) &= \frac{1}{2} + \oo{\gamma^2} \notag\\
  S_2(\gamma) &= \frac{3}{4} + \oo{\gamma^2}.
\end{align}
And finally,
\begin{align}
  \label{eq:33}
  v_E^2 &= \tr{H^2\rho} - {( \tr{H \rho} )}^2 \notag\\
        &= \frac{4 E_1^2}{\phi^4} \left[ \frac{3}{4} - \frac{1}{2^2} \right] + \oo{\phi^2}\notag\\
        &= 4 {\left( \frac{\pi^2 \hbar^2}{2m L^2} \right)}^2
          {\left( \frac{L}{2\pi \sigma} \right)}^4 \frac{1}{2} + \oo{\phi^2}\notag\\
        &= \frac{1}{2} \frac{\hbar^4}{m^2}\frac{1}{16\sigma^4} + \oo{\phi^2}\notag\\
  v_E &= \frac{\hbar^2}{4 \sqrt{2}m \sigma^2} + \oo{\frac{\sigma}{L} }^2.
\end{align}
Dividing $\hbar/v_E$ by $T_g = \frac{4mL^2}{\hbar \pi}$ then yields \cref{eq:41}.

\section{Uniform Distribution}
\label{sec-1-1}

\begin{figure}
    \centering
    \includegraphics[width=\linewidth]{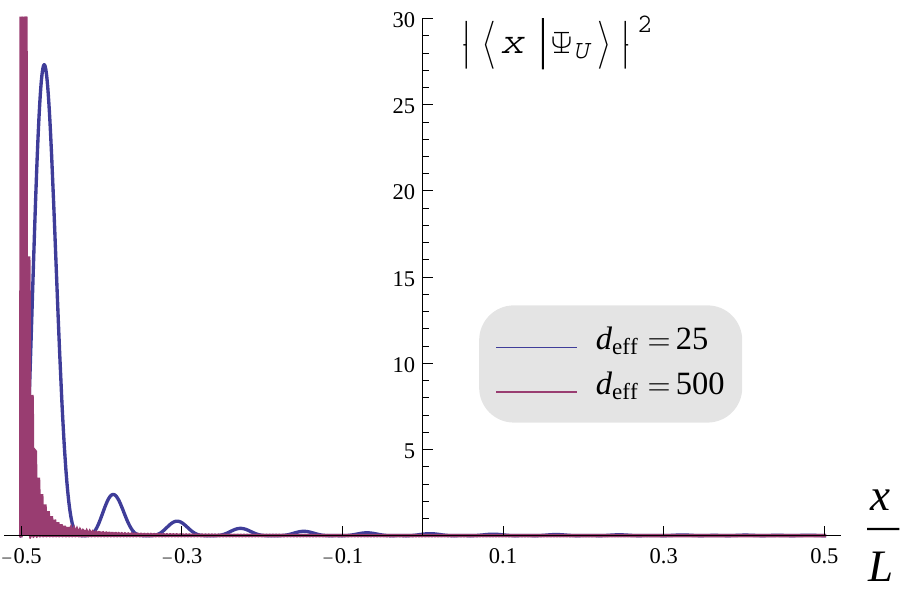}
    \caption{The wave function as defined in (\ref{eq:17}), for different values of $\de$.}
    \label{fig:uni-i-c}
    \centering
    \includegraphics[width=\linewidth]{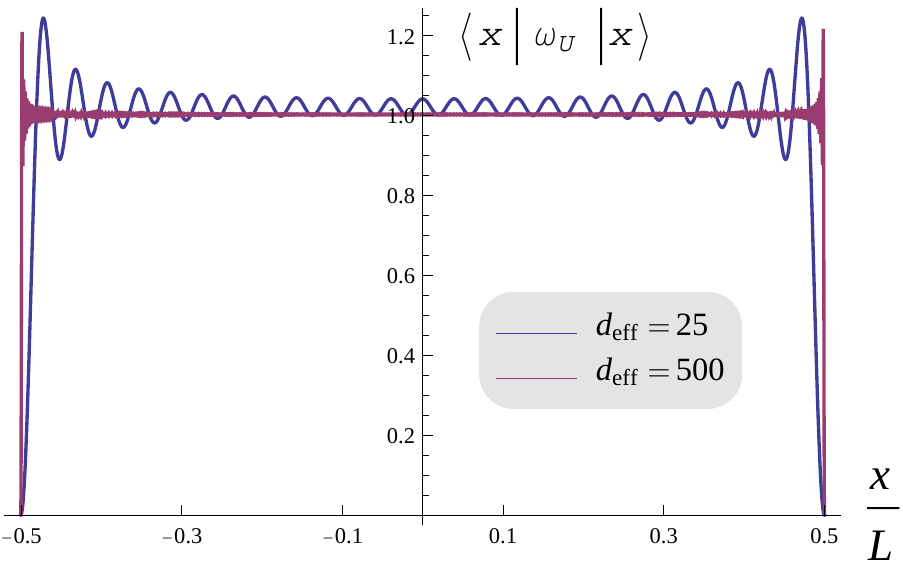}
    \caption{The spatial distribution of the equilibrium state $\omu$ for the initial state in \cref{eq:17}, for different values of $\de$.}
    \label{fig:uni-eq}
\end{figure}
\begin{figure}
    \centering
    \includegraphics[width=\linewidth]{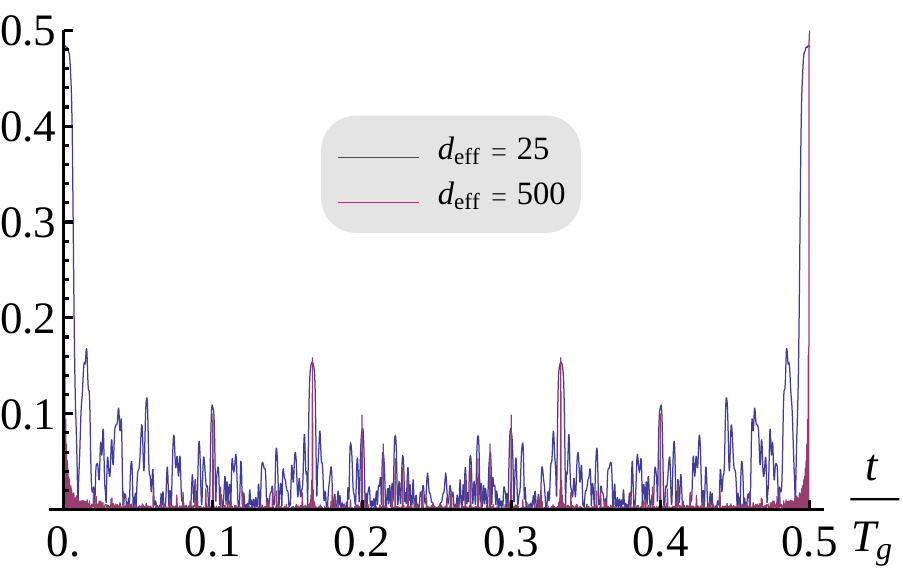}
    \caption{$D_M(\omu,\rou(t))$ with the uniform initial condition for two different $\de$.}
    \label{fig:2}
\end{figure}
\begin{figure}
    \centering
    \includegraphics[width=\linewidth]{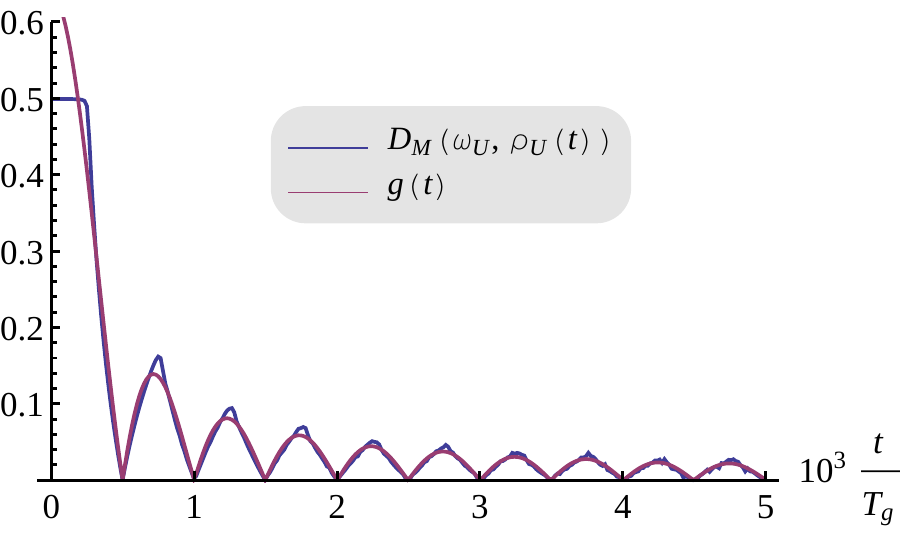}
    \caption{Short time comparison between the numerical computation of $D_M(\omu,\rou(t))$ (for $\de=500$) and the $g(t)$ given in \cref{eq:18}.}
    \label{fig:1}
\end{figure}

The second initial condition considered was the constant distribution over $N$ energy eigenstates, with
\begin{equation}
    \label{eq:17}
    \ket{\psu} = \frac{1}{\sqrt{N}}\sum_{i = 1}^{N} \ket{i},
    \quad \rou = \ket{\psu}\!\!\bra{\psu},
\end{equation}
for some $N \in \Nat$.
This initial condition was chosen because it allows for the exact evaluation of the most relevant part of the sum which defines the distinguishability.
It is trivial to verify that $\de = N$. \Cref{fig:uni-i-c} shows the probability distribution of this state in position space, while \cref{fig:uni-eq} displays the same for its corresponding equilibrium state $\omu = \frac{1}{N} \sum_{n = 1}^{N} \ketbra{n}$.

Since the initial state is clearly concentrated on the left side of the box, we use a slightly different measurement $M = \left\{\Pi_L,\Pi_R \right\}$, which checks on which side of the box the particle is
\begin{equation}
    \label{eq:16}
    \Pi_L = \int_{-\nf[L]{2}}^{0} \ketbra{x} \dif x , \quad \Pi_R = \id - \Pi_L.
\end{equation}
\begin{figure}
    \includegraphics[width=\linewidth]{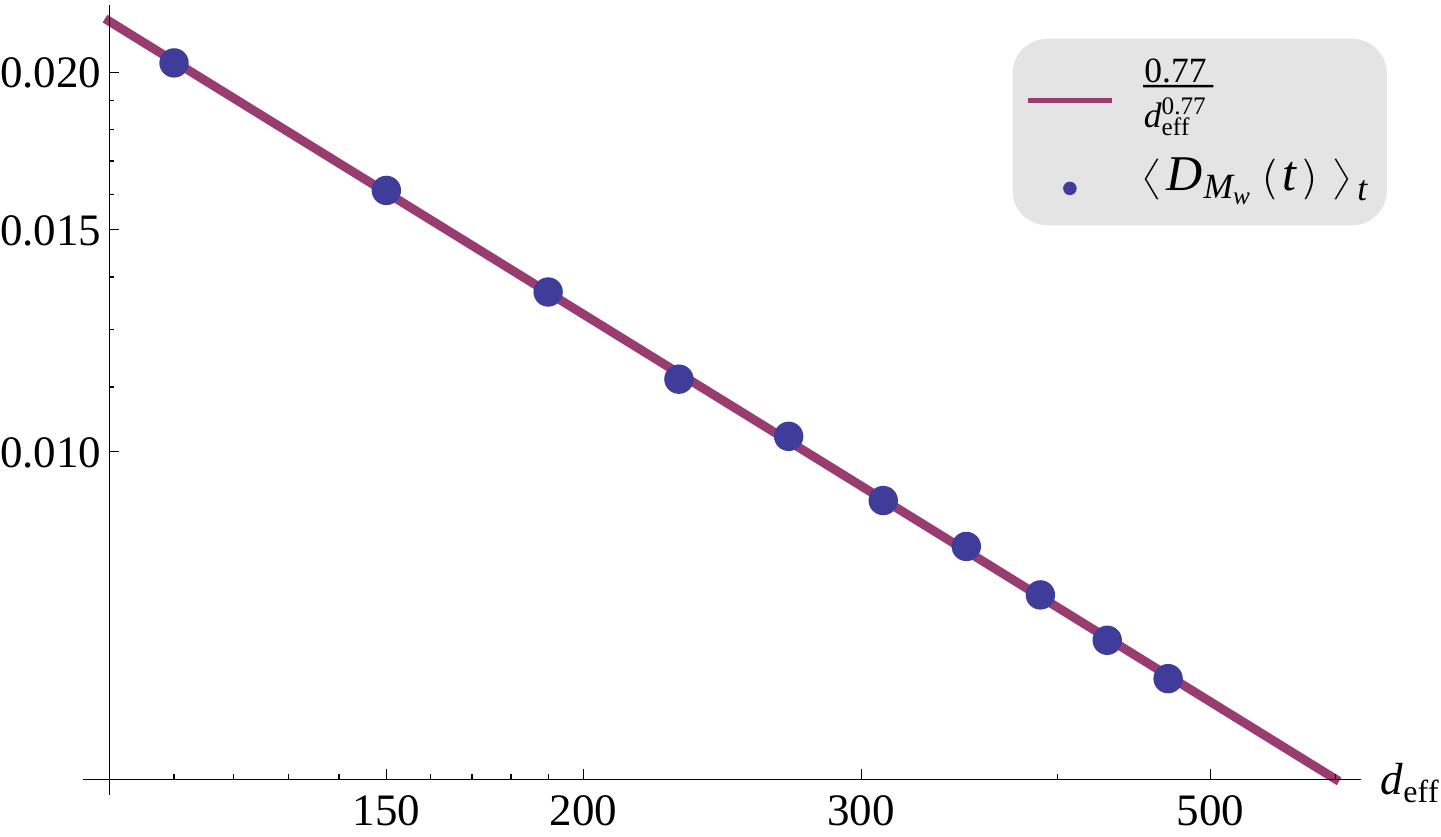}
    \caption{Log-scale plot of the time average of the distinguishability for the uniform initial condition, $D_{M}(\rou(t),\omu)$, as per \cref{eq:8,eq:17,eq:15} for several values of $\de$.}
    \label{fig:7}
\end{figure}

Again, we wish to study the time evolution of the distinguishability according to this measurement, it is now given by the sum
\begin{equation}
    \label{eq:15}
    D_M(\omu,\rou) = \left| \sum_{n\neq j = 1}^{N} %
      \frac{\expo{i (n^2-j^2)\nu t}}{N}
      \braket{n|\Pi_L|j}\right|,
\end{equation}
with $\nu = \frac{\pi^2 \hbar }{2mL^2}$.
This function is displayed on both curves of \cref{fig:2}, and on the blue curve of \cref{fig:1}.
The equilibration behaviour is, again, clearly visible in the high-dimensional limit, and we see (from \cref{fig:1}) that the first zero of the $D_M$ function is a good approximation for the equilibration time scale.

Following similar calculations as those from \cref{sec:equil-time-scale} one finds that, for short times $t \ll T_g$, \cref{eq:15} can be well approximated by the function
\begin{equation}
    \label{eq:18}
    g(t) = \frac{1}{N \pi} \left|\frac{\sin(2N\nu t)}{\sin(\nu t)} - {\cos(\nu t)}\right|.
\end{equation}
\Cref{fig:1} shows how these two functions compare to each other.
To estimate the first zero of $D_M(t)$, we calculate the first zero of $g(t)$, which is exactly at $2\sin(2N\nu t) = \sin(2\nu t)$.
For large $N$, this is approximately
\begin{align}
  \label{eq:23}
  \sin(2N\nu t) &\approx \nu t \notag\\
  t &\approx \frac{2\pi}{4N\nu} = \frac{T_g}{4N},
\end{align}
which yields the time scale
\begin{align}
  \label{eq:24}
  \frac{\tau_{\mbox{\tiny U}}}{T_g} \approx \frac{1}{4 N} = \frac{1}{4 \de},
\end{align}
in good accordance with the time scale results for the Gaussian.

For completeness, \cref{fig:7} shows that the infinite-time average of the distinguishability in this scenario scales with $\de^{-0.77}$, which is very similar to the Gaussian's $\de^{-0.79}$ although the multiplicative constant is larger.
\end{document}